\newtheorem{assumption}{Assumption}
\newcommand*{\QEDA}{\hfill\ensuremath{\square}}%
\begin{document}

	\title{Local Descent for Temporal Logic Falsification of Cyber-Physical Systems (Extended Technical Report)}%
%
\author{Shakiba Yaghoubi, \and Georgios Fainekos}
%
%
%
\institute{School of Computing, Informatics, and Decision Systems Engineering\\
Arizona State University, Tempe, AZ, USA\\
Email: \{syaghoub, fainekos\}@asu.edu }

\maketitle              

\begin{abstract}
One way to analyze Cyber-Physical Systems is by modeling them as hybrid automata. 
Since reachability analysis for hybrid nonlinear automata is a very challenging and computationally expensive problem, in practice, engineers try to solve the requirements falsification problem. 
In one method, the falsification problem is solved by minimizing a robustness metric induced by the requirements. 
This optimization problem is usually a non-convex non-smooth problem that requires heuristic and analytical guidance to be solved.
In this paper, functional gradient descent for hybrid systems is utilized for locally decreasing the robustness metric. 
The local descent method is combined with Simulated Annealing as a global optimization method to search for unsafe behaviors.

\keywords Falsification; Hybrid systems; Optimization.  
\end{abstract}

\section{Introduction}
\label{sec:intro}

\ifthenelse{\boolean{TechReport}}
{
In the last three decades, we have come to expect that we will be transported safely, reliably, and efficiently.
Technologically, we have reached this point by increasingly adding sensors and embedded computers in ground vehicles, airplanes, and locomotives.
However, as the software complexity increases so does the number of catastrophic software bugs.
Therefore, Model-Based Development (MBD) and auto-coding technologies are currently used as the preferred development method for reducing errors \cite{FerrariGMFT10fmics}.
Another benefit of MBD is that the system efficiency can be analyzed and optimized even before any prototypes are built.

Even though MBD can reduce coding errors, it may not necessarily reduce system design errors with respect to functional requirements.
This is particularly pronounced in Cyber-Physical Systems (CPS) where computer software interacts with and controls the physical environment. 
In order to analyze the safety of such systems, a variety of software tools have been developed for various classes of systems when the requirements concern reachable states in the system \cite{FrehseCAV11,ChenAS13cav}.
Nevertheless, when the system is complex in both the software and the physical dynamics, and the requirements have spatiotemporal constraints, e.g., as expressed in Metric \cite{Koymans90} or Signal \cite{MalerNickovic04} Temporal Logic (TL), then current reachability analysis methods cannot provide an answer.

 In order to address the need of providing real-time analysis of the behavior of such systems, a variety of search-based falsification methods has been developed (for a survey see \cite{KapinskiEtAl2016csm}).
}
{
In order to address the need for providing safety and real-time analysis for Cyber-Physical Systems (CPS), a variety of search-based falsification methods has been developed (for a survey see \cite{KapinskiEtAl2016csm}).
}
In search based falsification methods, the working assumption is that there is a design error in the system, and the goal of the falsifier is to search and detect system behaviors that invalidate (falsify) the system requirements.
\ifthenelse{\boolean{TechReport}}
{
Among search-based methods, multiple shooting optimization techniques \cite{zutshi2014multiple,zutshi2013trajectory} have shown great promise with a modest preprocessing stage, but still they cannot handle temporal logic requirements. 
Hence, tree search methods \cite{PlakuKV09tacas} and single shooting TL robustness guided approaches \cite{AbbasFSIG13tecs} remain the state of the art in TL falsification.
More recently, in \cite{DreossiDDKJD15nfm}, it was demonstrated that combining tree search based methods with TL robustness can improve the falsification detection rate in certain problem instances.
In brief, different falsification methods are still needed for different problem instances.
}
{
Typically, such requirements are formally expressed in Metric (MTL) \cite{Koymans90} or Signal (STL) \cite{MalerNickovic04} Temporal Logic (TL).
}

In this paper, we continue the progress on improving single shooting falsification methods for TL specifications \cite{AbbasFSIG13tecs}.
\ifthenelse{\boolean{TechReport}}
{
This class of methods is guided by evaluating how robustly a system trajectory satisfies a TL specification  \cite{FainekosP06fates,FainekosP09tcs}.
}
{
This class of methods is guided by evaluating how robustly a system trajectory satisfies a TL specification \cite{FainekosP09tcs}.
}
Positive values mean that the system trajectory satisfies the specification, while non positive values mean that the specification has been falsified by the system trajectory.
Single shooting falsification methods sample one or multiple system trajectories for the whole duration of the test time, they evaluate the TL robustness of each trajectory, and, then, they decide where to sample next in the search space.
Ideally, at each iteration, the proposed new samples will produce trajectories with TL robustness less than the previously sampled trajectories.
However, in general, this cannot be guaranteed unless some information is available about the structure of the system.
In \cite{abbas2014functional}, it was shown that given a trajectory of a non-autonomous smooth non-linear dynamical system and a TL specification, it is possible to compute a direction in the search space along which the system will produce trajectories with reduced TL robustness.
This direction is referred to as {\it descent direction} for TL robustness.

Our main contribution in this paper is that we extend the results of \cite{abbas2014functional} to computing local descent directions for falsification of TL specifications for hybrid systems. The extension is nontrivial since as discussed later in the paper, the sensitivity analysis is challenging in the case of hybrid systems.
\ifthenelse{\boolean{TechReport}}
{
In particular, we focus on hybrid automata \cite{tabuada2009verification,Alur15book} with non-linear dynamics in each mode and external inputs (non-autonomous systems).
Hybrid automata \cite{tabuada2009verification,Alur15book} is a mathematical model which can capture a wide range of CPS.
We also present several examples of hybrid automata for which we can compute such descent directions for TL specifications.
}
{
In particular, we focus on hybrid automata \cite{Alur15book} with non-linear dynamics in each mode and external inputs (non-autonomous systems).
Hybrid automata is a mathematical model which can capture a wide range of CPS.
}
We remark that the descent directions computed can only point toward local reduction of TL robustness.
Hence, we propose combining descent direction computations with a stochastic optimization engine in order to improve the overall system falsification rate.

We highlight that the contributions of this paper have some important implications.
First and foremost, it should be possible to derive results for approximating the descent direction for hybrid systems without requiring explicit knowledge of the system dynamics.
For example, in \cite{YaghoubiF2017acc}, we showed that this is possible for smooth non-linear dynamical systems by using a number of successive linearizations along the system trajectory.
The method was applied directly to Simulink models.
Second, the local descent computation method could be further improved by utilizing recent results on a smooth approximation of TL robustness \cite{PantAM17tech}.
Therefore, the results in this paper could eventually lead to testing methods which do not require explicit knowledge of the system dynamics, and could be applied directly to a very large class of models, e.g., Simulink models, without the need for model translations or symbolic model extraction.

\section{Problem Statement}
  In order to formalize the problem that we deal with in this paper, we will describe the system under test and also the system requirements in this section.
\subsection{System Description}
Hybrid automaton (HA) is a model that facilitates specification and verification of hybrid systems \cite{Alur15book}. 
A hybrid automaton is specified using a tuple $\Hc = (H, H_0 , U, Inv, \Ec, \Sigma)$,  
where $H= L \times X$ denotes the `hybrid' discrete and continuous state spaces of $\Hc$: $L \subset \Ne$ is the set of discrete states or locations that the system switches through (each location attributes different continuous dynamics to the system), and $X \subseteq \Re^n$ is the continuous state space of the system, $H_0 =L_0\times X_0\subseteq H$ is the set of \emph{initial conditions}, $U$ is a bounded subset of $\Re^m$ that indicates the input signals ranges, $Inv: L \rightarrow 2^{X \times \Re^+}$ assigns an invariant set to each location, $\Ec$ is a set of tuples $(E,Gu,Re)$ that determine transitions between locations. Here, $E \subseteq L \times L$ is the set of control switches, 
$Gu : E \rightarrow 2^{X\times \Re^+}$ is the guard condition that enables a control switch (i.e, the system switches from $l_i$ to $l_j$ when $(x(t),t)\in X\times \Re^+$ satisfies $Gu((l_i,l_j))$) and, $Re : E \times X\rightarrow X $ is a reset map that given a transition $e \in E$ and a point $x$ for which $Gu(e)$ is satisfied, maps $x$ to a point in the state space $X$. Finally, $\Sigma$ defines the continuous dynamics in each location $l\in L$:
     \begin{align}
\Sigma(l): \;\; {\textstyle\dot{x}}=F_l(x,u(t),t), \; x\in X, \; \forall t:\;u(t) \in U
\end{align}
where $\dot{x}= \frac{dx}{dt}$, $x \in X$ is the system continuous state, and $u: [0,T] \rightarrow U$ is the input signal map which is chosen from the set of all possible input signals $U^{[0,T]}$ whose value at time $t$ is denoted as $u(t)$. Also, $\forall l\in L, \;F_l: X\times U \times \Re_+ \rightarrow \Re$ is a $C^1$ flow that represents the system dynamics at location $l$.
\ifthenelse{\boolean{TechReport}}
{
For more information about hybrid systems please refer to \cite{tabuada2009verification} and \cite{Alur15book}.
}
{}
 
A {\it hybrid trajectory} $\eta{(h_0,u(t),t)}$ starting from a point $h_0=(l_0,x_0) \in H_0$ and under the input $u \in U^{[0,T]}$ is a function $\eta : H_0\times U\times \Re_+ \rightarrow H$ which points to a pair (control location, state vector) for each point in time: $\eta(h_0,u(t),t) = (l(h_0,u(t),t),s(h_0,u(t),t))$, where $l(h_0,u(t),t)$ is the location at time $t$, and $s(h_0,u(t),t)$ is the continuous state at time $t$.

We write the dynamical equations for the continuous system trajectory as: 
 \begin{align}
s(x_0,u(0)&,0)=x_0\nonumber\\
\frac{d s(x_0,u(t),t)}{d t}= F_l(s(x_0,u(t),t),u(t),t) \quad &\mbox{while} \; (s(x_0,u(t),t),t) \in Inv(l)\label{eq:2a} \\
s(x_0,u(t),t^+)= Re((l_i,l_j),s(x,u(t),t^-)) \;  &\mbox{if} \; {\scriptsize \left \{
\begin{array}{cc} 
(s(x_0,u(t),t^-),t) \in Gu((l_i,l_j)) \\
(s(x_0,u(t),t^+),t)\in Inv(l_j) 
\end{array} \right.}
\label{subeqn-2:hybdyn}
\end{align}
If the point $(s(x_0,u(t),t^+),t)$ lies outside $Inv(l_j)$, there is an error in the design. We assume that such errors do not exist in the system. The times in which the location $l$ and consequently the right-hand side of the equation (\ref{eq:2a}) changes, are called \emph{transition times}. In order to avoid unnecessary technicalities, in the above equations we use the notation of \cite{donze2007systematic} and denote transition times as $t^-$ and $t^+$, where $t^-$ is the time right before the transition and $t^+$ is the time right after that. 
However in more technical analysis of hybrid systems, one needs to consider the notion of hybrid time explained in \cite{goebel2006solutions} where a hybrid trajectory is parametrized not only by the physical time but also by the number of discrete jumps. 
When we consider the trajectory in a compact time interval $[0,T]$ and $\eta$ is not Zeno%
\footnote{$\eta$ is \emph{Zeno} if it does an infinite number of jumps in a finite amount of time. A hybrid system is Zeno if at least one of its trajectories is Zeno.},
the sequence of transition times is finite. 
\begin{assumption}
	\label{ass1}
	We assume our system is deterministic, it does not exhibit Zeno behaviors and given $(h_0,u)$ there is a unique solution  $\eta(h_0,u(t),t)$ to the system.
\end{assumption}
\begin{remark}
The input signal map $u$, should be represented using a combination of finitely many basis functions. In this paper we use piecewise constant signals. 
\end{remark}

\subsection{System Requirements}
Temporal logic formulas formally capture requirements concerning the system behavior.
They could be expressing the requirements over Boolean abstractions of the behavior using atomic propositions as in MTL \cite{Koymans90}, or directly through predicate expressions over the signals as in STL \cite{MalerNickovic04}.
Since the differences are only syntactic in nature (see \cite{DokhanchiHF15memocode}), in the following, we will just be using the term Temporal Logics (TL) to refer to either logic.
   
TL formulas are formal logical statements that indicate how a system should behave and are built by combining \emph{atomic propositions} (AP) or predicates using logical and temporal operators. 
The logical operators typically consist of {\it conjunction} ($\wedge$), {\it disjunction} ($\vee$), {\it negation} $(\neg)$, and {\it implication} ($\rightarrow $), while temporal operators include {\it eventually} $(\Diamond_\Ic)$, {\it always} $(\Box_\Ic)$ and {\it until} $(\Un_\Ic)$ where the index $\Ic$ indicates a time interval. 
\ifthenelse{\boolean{TechReport}}
{
For example, the specification: ``The absolute value of the trajectory $s$, should never go beyond $\alpha$'' can be captured using the TL formula $\Box(|s|\leq\alpha )$, or the timed specification ``The value of the signal $s$ should reach the bound $(s_{ref} \pm 5\%) $ within $\delta$ seconds and stay there afterwards'' can be formulated as $\Diamond_{[0,\delta]}(\Box(|(s-s_{ref})/s_{ref}|<5\%))$.
}
{
For example, the specification ``The value of the trajectory $s$ should reach the bound $(s_{ref} \pm 5\%) $ within $\delta$ seconds and stay there afterwards'' can be formulated as $\Diamond_{[0,\delta]}(\Box(|(s-s_{ref})/s_{ref}|<5\%))$.
}


The robustness of a trajectory $\eta(x_0,u,t)$ with respect to a TL formula is a function of that trajectory which shows how well it satisfies the specification (see \cite{FainekosP09tcs} for details on how the robustness is defined and calculated).  
The function creates a positive value when the requirement is satisfied and a negative value otherwise.
Its magnitude quantifies how far the specification is from being satisfied for non-positive values, or falsified for non-negative values.
Software tools such as  \staliro \cite{annpureddy2011s} compute the robustness value of a TL formula given a trajectory $\eta(x_0,u,t)$.
In order to detect unsafe system behaviors, we should falsify the specification, which means we need to find trajectories with non-positive robustness values. 
As a result, in a search based falsification, the effort is put on reducing the robustness value by searching in the parameter space. 

It can be easily shown that given a TL formula $\phi$ and a trajectory $\eta(h_0,u,t)$ of a hybrid automaton $\Hc$ that satisfies the specification, if Assumption \ref{ass1} holds, then there exists a \emph{critical time} $t^*\in[0,T]$ and a \emph{critical atomic proposition} (or {\it critical predicate}) $p^*$ with respect to which the robustness is evaluated \cite{abbas2013computing}. 
For example, in practice, the tool \staliro \cite{annpureddy2011s} computes the critical time $t^*$ and atomic proposition $p^*$ along with the robustness value of the specification.
Reducing the distance of the trajectory $\eta(h_0,u,t)$ from the set defined by $p^*$ at the critical time instance $t^*$ will not increase the robustness value; and in most practical cases it will actually decrease it. 
As a consequence, the TL falsification problem can be locally converted into a safety problem, i.e, always avoid the unsafe set $\Uc$ defined by $p^*$. 
Hence, we need to compute a descent vector $(h_0',u')$ that will decrease the distance between $\eta(h_0',u',t ^*)$ and the unsafe set $\Uc$. 



\vspace{-5pt}
\subsection{Problem Formulation}
%

Let $H_\Uc \subseteq H$ denote the system unsafe set, if $\eta(h_0,u(t),t)$ enters $H_\Uc$ then system specification is falsified.
To avoid a digression into unnecessary technicalities, we will assume that, both the set of initial conditions and the unsafe set are each included in a single control location, i.e, 
$H_0 = \{l_0\}\times X_0$, and $H_\Uc = \{l_\Uc\}\times \Uc$, where $l_0, l_\Uc \in L$, and $X_0,\Uc \subseteq X$. 



\begin{definition}
Let $D_{H_\Uc} : H \mapsto \Re_+$ be the distance function to $H_\Uc$, defined by
\begin{equation} 
\label{eq:DHU}
D_{H_\Uc}((l,x)) = \left \{
\begin{array}{cc} 
d_\Uc(x) & \mbox{ if }  l = l_\Uc \\
+\infty & \mbox{ otherwise } 
\end{array} \right .
\end{equation}
where $d_\Uc(x) = \inf_{\unspt\in \Uc} ||x-\unspt||.$
\end{definition}

	Given a compact time interval $[0,T]$, $h_0\in H_0$, and the system input $u\in U^{[0,T]}$, we define the robustness of the system trajectory $\eta(h_0,u(t),t)$ as
	\begin{equation}
	\label{eq:rob}
	f(h_0,u) \triangleq \min_{0 \le t \le T} D_{H_\Uc}(\eta(h_0,u(t),t))
	\end{equation} 
	and the respective critical time as
$t^* = \argmin_{t\in[0,T]} D_{H_\Uc}(\eta(h_0,u(t),t))$.
Since all trajectories start at $l=l_0$, we will write $f(h_0,u)$ as $f(w)$ where $w=(x_0,u)$. 
Trajectories of minimal robustness indicate potentially unsafe behaviors, and if we can reduce the robustness value to zero, we have a falsifying trajectory. As a result robustness value should be minimized with respect to $w$. Our problem can be formulated generally as follows:
\begin{equation}
	\label{eq:general}
\mbox{minimize} \;\: f(w) \;\: \mbox{such that} \;\: w \in X_0\times U^{[0,T]}
\end{equation}

\begin{figure}[t]
\centering
\includegraphics[width=7cm]{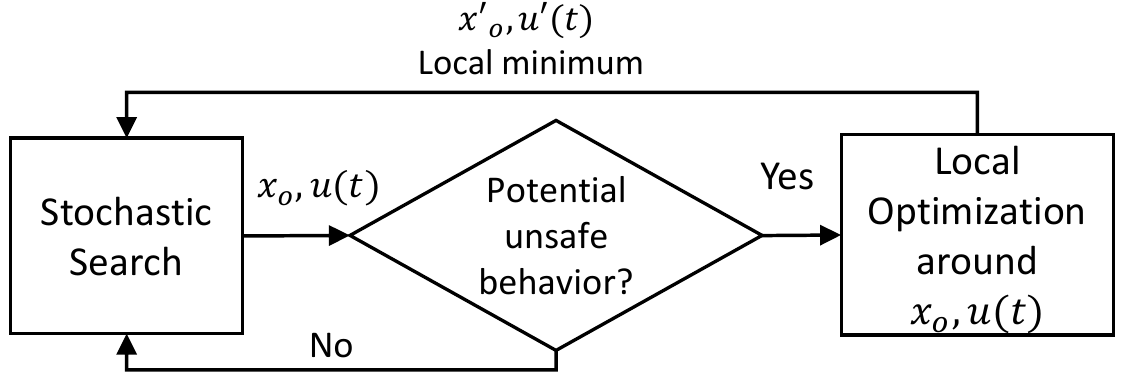}
\vspace{-5pt}
\caption{2-stage falsification: The stochastic search will search for the global optimizer while the local search improve the search speed.}
\vspace{-10pt}
\label{fig:2stage}
\end{figure}

Finding falsifying trajectories can be done in 2 stages. In the first stage, a higher level stochastic sampler determines a hybrid trajectory -a sequence of locations and state vectors- that exhibits system's potential bad behavior, and in the second stage, out of all the neighboring trajectories that follow the same sequence of locations, we find the trajectory of minimal robustness (see Fig (\ref{fig:2stage})). This can be done using local minimization. In this paper, we focus on solving the problem in this stage: we will find the trajectory of minimum robustness in the neighboring of a previously created trajectory in the first stage.

 Before we address our special problem of interest we should impose further assumptions on our system stated below:
\begin{enumerate}
\item The system is observable, i.e. we have access to all the system states, or we have a state estimator which is able to estimate them.

\item In the local search stage, we always are able to find a neighboring tube around each trajectory such that none of the trajectories inside that tube hit the guard tangentially. This ensures that trajectories of the system $\Hc$ starting close enough to $x_0$ and under neighboring inputs of $u$ undergo similar transitions/switches. In hybrid systems analysis, this property is called trajectory robustness (not to be confused with trajectory robustness in this paper) and is guaranteed if we can find an auto-bisimulation function of a trajectory and the trajectories starting from its neighboring initial conditions and under neighboring inputs \cite{winn2015safety}.
\item The system is deterministic and the transitions are taken as soon as possible. In order to have a deterministic system, if two transitions happen from the same location, their $Guards$ should be mutually exclusive.

\item $Guards$ are of the form $g(x,t)=0$ and $Reset$ maps are functions of the form $x'=h(x)$, where $g$ and $h$ are $C^1$ functions.  For all the states that satisfy a $Guard$ condition the corresponding $Reset$ map should satisfy $\frac{\partial h}{\partial x}\big\rvert _x\neq 0$.

\item The trajectory $\eta(h_0,u(t),t)$ returned by the first stage, from which we descend, enters the location of the unsafe set. 
\end{enumerate}

The last assumption is made so that our problem be well-defined (note that the objective function \eqref{eq:rob} will have finite value only if trajectory enters unsafe location). The task of finding such an initial condition $h_0$ is delegated to the higher-level stochastic search algorithm within which our method is integrated (Fig. \ref{fig:2stage}). If finding such a trajectory for the higher-level stochastic algorithm is hard, we can still improve our trajectories locally by descending toward the guards. This will be discussed more in the next section. 


The problem is addressed in the following:

 \begin{problem}
	Given a hybrid automaton $\Hc$, a compact time interval $[0,T]$, a set of initial conditions $H_0 \subseteq H$, a set of inputs $U^{[0,T]}$, a point $h_0 = (l_0,x_0) \in H_0$ and an input $u \in U^{[0,T]}$ such that the system trajectory satisfies $0 < f(w) < +\infty$, find a vector $dw=(dx_0,du)\in X\times U^{[0,T]}$ that satisfies the following property: 
 	
 	$\exists \Delta_1,\Delta_2 \in \Re^+$ such that $\forall \delta_1 \in (0,\Delta_1),\delta_2 \in (0,\Delta_2)$, $h_0' = (l_0,x_0+\delta_1 dx_0) \in H_0$ and $u'=u+\delta_2 du\in U^{[0,T]} $, $\eta(h'_0,u'(t),t)$ undergoes the same transitions as $\eta(h_0,u(t),t)$, and also $f(w+\delta dw) \le f(w)$ where $\delta=min\{\delta_1, \delta_2\}$.
 	\label{pb:main}
 \end{problem}

 Finding such a descent direction can help improve the performance of stochastic algorithms \cite{AbbasFSIG13tecs} that intend to solve the general problem in Eq. (\ref{eq:general}).

Note that for the piecewise constant inputs $u$ that we are working with in this paper, $du$ is also a piecewise constant signal whose variables should be computed. Variables of $du$ show the desired changes in that of the input signal $u$.
\section{Finding a descent direction for the robustness}
\label{des}
In this section, given a trajectory $\eta{(h_0,u(t),t)}$, we find $dx_0$ and $du$ such that the trajectory $\eta{(h'_0,u'(t),t)}$, where $h'_0=(l_0, x_0+\delta dx_0),u'(t)=u(t)+\delta du(t)$, attains a smaller robustness value; i.e $f(w')=f(x'_0,u')<f(x_0,u)=f(w)$. The robustness function in Eq. (\ref{eq:rob}) is hard to deal with as it is non differentiable and non convex \cite{abbas2013computing}. To solve this issue we calculate the descent direction with respect to a convex, almost everywhere differentiable function, and show that decreasing the value of this function yields a decrease in the robustness function:
\begin{theorem}
	\label{thm:substitute}
	Let $x_0,x'_0\in X_0$, $u,u'\in U^{[0,T]}$, and assume that the critical time for the continuous part of the hybrid trajectory  $s  \triangleq s({x}_0,{u(t)},t)$, is ${t}^*$. Define

\begin{equation} 
\label{jcon}
J(x'_0,u') = \left \{
\begin{array}{cc} 
\lVert {s}(x'_0,u'(t^*),{t}^*) - z({x_0},{u(t^*)},{t}^*)\lVert & \mbox{ if }  l = l_\Uc \\
+\infty & \mbox{ otherwise } 
\end{array} \right .
\end{equation}
	where $l$ is the first argument of $\eta(h'_0,u'(t^*),{t}^*)$, and 
	\begin{align}
	z({x_0},{u(t)},t) = \argmin_{z\in\mathcal{U}} \lVert z -s({x}_0,{u(t)},t)\rVert.
	\end{align}
	If we find a trajectory $s' \triangleq {s}(x'_0,u'(t),t)$ such that $J(x'_0,u') < J({x}_0,{u})$, 
	then the robustness of the trajectory ${s'}$ is smaller than that of $s$, i.e:
	$f(x'_0,u') < f({x}_0,{u})$.
\end{theorem}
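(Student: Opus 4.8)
The plan is to exploit the fact that the surrogate objective $J$ coincides with the true robustness $f$ at the base point $(x_0,u)$, but only \emph{dominates} $f$ after the perturbation, so that any strict decrease of $J$ is transferred to a strict decrease of $f$. First I would establish the identity $J(x_0,u) = f(x_0,u)$. Since $t^*$ realizes the minimum in \eqref{eq:rob} and $D_{H_\Uc}$ is finite only in location $l_\Uc$, the base trajectory must occupy $l_\Uc$ at $t^*$ (otherwise the minimum would be $+\infty$, contradicting $f(w)<+\infty$). Hence $f(x_0,u) = D_{H_\Uc}(\eta(h_0,u(t^*),t^*)) = d_\Uc(s(x_0,u(t^*),t^*))$. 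By the definition of $z(x_0,u(t^*),t^*)$ as the point of $\Uc$ nearest to $s(x_0,u(t^*),t^*)$, this set-distance equals $\lVert s(x_0,u(t^*),t^*) - z(x_0,u(t^*),t^*)\rVert$, which is precisely the first branch of $J(x_0,u)$.

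Next I would use the hypothesis $J(x'_0,u') < J(x_0,u)$ to rule out the degenerate branch of $J$. Because $J(x_0,u)=f(x_0,u)<+\infty$, the perturbed location $l$ at $t^*$ cannot differ from $l_\Uc$; were $l\neq l_\Uc$, we would have $J(x'_0,u')=+\infty$, contradicting the assumed strict inequality. Therefore the perturbed trajectory is in $l_\Uc$ at $t^*$ and $J(x'_0,u') = \lVert s(x'_0,u'(t^*),t^*) - z(x_0,u(t^*),t^*)\rVert$.

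Finally I would chain the estimates. Since $z(x_0,u(t^*),t^*)\in\Uc$, it is an admissible competitor in the infimum defining $d_\Uc$, giving the one-sided bound $d_\Uc(s(x'_0,u'(t^*),t^*)) \le J(x'_0,u')$. Evaluating the robustness of the perturbed trajectory at the single instant $t^*$ and recalling that $f$ is a minimum over all $t\in[0,T]$ yields $f(x'_0,u') \le D_{H_\Uc}(\eta(h'_0,u'(t^*),t^*)) = d_\Uc(s(x'_0,u'(t^*),t^*))$. Combining the two displays with the results of the previous paragraphs gives
\[ f(x'_0,u') \le d_\Uc(s(x'_0,u'(t^*),t^*)) \le J(x'_0,u') < J(x_0,u) = f(x_0,u), \]
which is exactly the assertion.

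The crux of the argument — and the reason $J$ is a genuinely useful surrogate rather than a tautology — is the asymmetry between its two endpoints: at the base point $J$ equals the exact set-distance $d_\Uc$, whereas after perturbation it only upper-bounds $d_\Uc$, because the reference point $z$ is \emph{frozen} at the base projection. The surviving one-sided inequality $d_\Uc \le J$ is what powers the whole proof, while the payoff is that $J$, being the distance to a single fixed point, is convex and almost-everywhere differentiable even when $\Uc$ (and hence $d_\Uc$) is neither, which is what makes the subsequent descent computation tractable. I expect the only real subtlety to be bookkeeping around the location branch of $J$ and the implicit requirement that the projection $z$ exist (so that $z\in\Uc$); the latter is closedness of $\Uc$, already built into the $\argmin$ in the definition of $z$.
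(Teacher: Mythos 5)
Your proof is correct and takes essentially the same route as the paper's, which compresses the whole argument into the single chain $f(x'_0,u') = \min_{0 \le t \le T} D_{H_\Uc}(\eta(h'_0,u'(t),t)) \leq J(x'_0,u') < J(x_0,u) = f(x_0,u)$. You merely make explicit the justifications the paper leaves implicit: that $J(x_0,u)=f(x_0,u)$ because $z$ is the exact projection at the critical time, and that $f(x'_0,u') \le J(x'_0,u')$ because the frozen point $z\in\Uc$ upper-bounds the set distance and $t^*$ is just one admissible time in the minimum.
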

\begin{proof}
	By Eq. (\ref{eq:rob}) we have
	$\displaystyle f(x'_0,u') = \min_{0 \le t \le T} D_{H_\Uc}(\eta(h'_0,u'(t),t))\leq J(x'_0,u') < J({x}_0,{u}) = f({x}_0,{u})$.
    \QEDA
    \end{proof}
 Let $x'_0=x_0+dx$ and $u'=u+du$. Consider $J$ at the unsafe location and define:
\begin{align}
	J(x'_0,u') = G(s(x'_0,u'(t^*),{t}^*)),
\end{align}
where $G(x)=\lVert x - z({x_0},u(t^*),{t}^*)\lVert $. Notice that the definition of $G$ is based on a primary trajectory from which we want to descend. The total difference of a multi variable function shows the change in its value with respect to the changes in its independent variables while its partial differential is its derivative with respect to one variable, while others are kept constant. In the following, $dx$ and $du$ are calculated using the chain rule, such that $J(x'_0,u') - J({x}_0,{u})=J(x_0+dx,u+du) - J({x}_0,{u})=dJ({x}_0,{u})<0$:
\begin{equation} 
\label{dj}
dJ(x_0,u;dx,du) = \frac{\partial G}{\partial x}^Tds(x_0,u,t^*)
\end{equation}
where $\frac{\partial G}{\partial x}\triangleq\left.
\frac{\partial G}{\partial x}\right\vert_{s(x_{0},u(t^*),t^*)}
\in\mathbb{R}^{n\times 1}$ is the steepest direction that increases distance from the unsafe set, i.e, $-\frac{\partial G}{\partial x}$ is along the approach vector mentioned in  \cite{abbas2013computing} that shows the direction of the shortest distance between $s(x_0,u(t^*),t^*)$ and the unsafe set. Now observe that:
\begin{equation} 
\label{ds}
ds(x_0,u,t^*) = D_1 s(x_0,u,t^*)dx_0+D_2 s(x_0,u,t^*)du
\end{equation}
where $D_i$ denotes the partial differentiation with respect to the $i^{th}$ argument (for instance $D_1s=\frac{\partial s}{\partial x_0}$). Here, $D_1 s(x_0,u,t^*)$ and $D_2 s(x_0,u,t^*)$ are the sensitivity of the trajectory to the initial condition and input at time $t^*$, respectively. In the next section we show how to calculate sensitivity for a hybrid trajectory. Using Eq. (\ref{dj}) and (\ref{ds}), we choose: 
\begin{align}
\label{dxu}
dx_0=-c_1(\frac{\partial G}{\partial x}^T D_1 s(x_0,u,t^*))^T, \quad
du=-c_2(\frac{\partial G}{\partial x}^T D_2 s(x_0,u,t^*))^T
\end{align}
for some $c_1,c_2>0$. As a result, we have $\label{dj2}
dJ(x_0,u) = -c_1||\frac{\partial s}{\partial x_0}^T\frac{\partial G}{\partial x}||^2-c_2||\frac{\partial s}{\partial u}^T\frac{\partial G}{\partial x}||^2$ $
\leq 0$ and the equality holds if and only if $\frac{\partial s}{\partial x_0}{\big\rvert_{(x_{0},u,t^*)}}^T\frac{\partial G}{\partial x}=\frac{\partial s}{\partial u}{\big\rvert_{(x_{0},u,t^*)}}^T\frac{\partial G}{\partial x}=0$.

 All the above calculations are based on the assumption that the trajectory enters the unsafe location, but even if finding a trajectory that enters the unsafe location using stochastic higher level search is hard, we can still improve trajectories locally by descending toward the guard $Gu^*$ that takes the trajectory to the location with the shortest possible path to the unsafe set. This is shown in Fig. \ref{fig:guarddes}. For instance if the guard $Gu^*$ is activated when $g(x)=0$, we can easily use zero finding methods to find a set $M=\{x\;|\; g(x)=0\}$ and replace $\Uc$ in all the previous calculations with the set $M$. 


\section{Sensitivity Calculation for a Hybrid Trajectory }\label{sec4}
Extending sensitivity analysis to the hybrid case is not straightforward and even
in the case that there is no reset in transitions and the state stays continuous, a discontinuity
often appears in the sensitivity function that needs to be evaluated \cite{donze2007systematic}. In order to make the results comprehensive, in this section we analyze the sensitivity for trajectories of a Hybrid automaton. Without loss of generality, in order to focus on the complexity that happens under transitions, we consider a hybrid automaton with only two discrete locations ($|L|=2$) and one control switch, also we assume $l_0\neq l_{\Uc}$. There are 2 scenarios: 
\begin{enumerate}
	\item 	$(s(x,u(t),t),t)$ is either inside $Inv(l_0)$ or $Inv(l_{\Uc})$
	\item 	$(s(x,u(t),t),t)  \in Gu((l_0,l_{\Uc}))$
\end{enumerate}

Let us use $p_{x_0}$ and $p_{u}$ to denote the sensitivity of the trajectory to changes in $x_0$ and $u$ respectively, i.e, $p_{x_0}(t,t_0)=D_1{s(x_0,u,t)}$ and $p_{u}(t,t_0)=D_2{s(x_0,u,t)}$.
	It can be shown easily that in the first scenario, while $(s(x_0,u,t),t) \in Inv(l_i)$ and $i\in\{0,\Uc\}$:
	 \begin{subequations}
	 	\begin{align}
	 		\dot{p}_{x_0}(t,t_0) = D_1  F_{l_i}(&s(x_0,u,t),u(t),t).p_{x_0}(t,t_0), \\
	 	\dot{p}_{u}(t,t_0) = D_1  F_{l_i}(s(x_0,u(t),t),&u(t),t).p_{u}(t,t_0) +D_2  F_{l_i}(s(x_0,u(t),t),u(t),t),	
		 	\end{align}
	 \end{subequations}
 with the following initial and boundary conditions:
 \begin{subequations}\label{eqn:bound}
	 	\begin{align}
	 	{p}_{x_0}(t_0,t_0)&=I_{n\times n}, \; {p}_{u}(t_0,t_0)=0,\\
	 	p_{x_0}(\tau^+,t_0)     &= r_{x_0}, \; 
	 	p_{u}(\tau^+,t_0)    = r_{u}.
	 	\end{align}
	 \end{subequations}
	 where $\tau^+$ is the right hand side limit of the transition time $\tau$ that satisfies $(s(x_0,u(\tau),$ $\tau),\tau)\in Gu((l_0,l_{\Uc}))$. We will calculate $r_{x_0}$ and $r_u$ in the following subsection. Consider that even if there is no reset, this jump happens in the state triggered transitions since neighboring trajectories have different transition times and as a result they are under different dynamics during the time between their transition times (see Fig. \ref{fig:jumps}).
     \begin{figure}[t]
            \centering
            \begin{minipage}[b]{0.47\textwidth}
	    	\includegraphics[width=\linewidth]{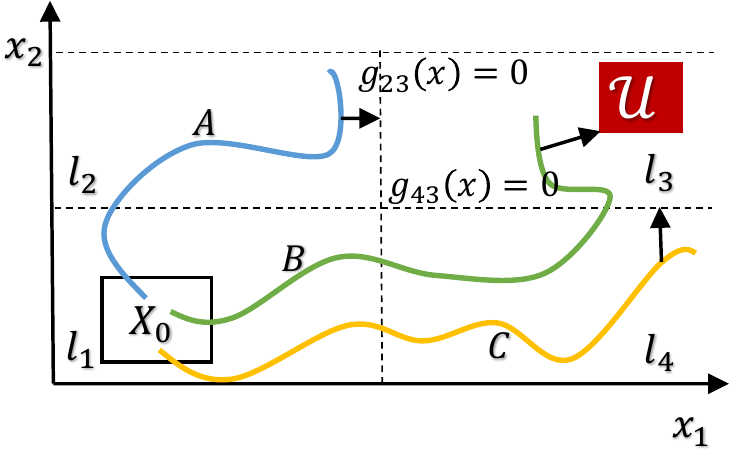}
\vspace{-15pt}
\caption{Trajectories B, A and C improve locally by descending toward the unsafe set, guard $g_{43}$ and guard $g_{23}$ respectively.}
\label{fig:guarddes}
            \end{minipage}
            \hfill
            \begin{minipage}[b]{0.47\textwidth}
            \vspace{-10pt}
           \includegraphics[width=0.9\linewidth]{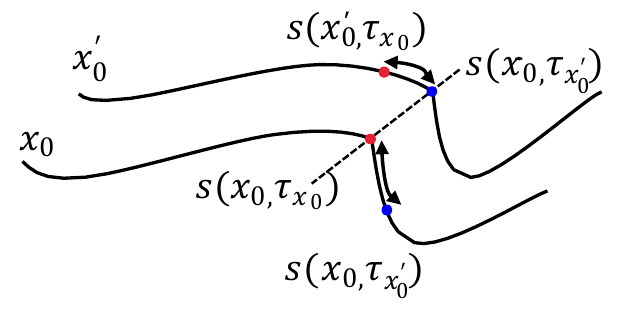}
          \vspace{-5pt}
          \caption{ Assuming $\tau_{x_0}<\tau_{x'_0}$, trajectories are under different dynamics for all the times $t\in[\tau_{x_0},\tau_{x'_0}]$, where $\tau_{x_0}$ and $\tau_{x'_0}$ are transition times for $s(x_0,.)$ and $s(x'_0,.)$ respectively.}
           \label{fig:jumps}
            \end{minipage}
\end{figure} 
     
\subsection{Sensitivity Jump Calculation}
  Assume that if $g(s(x_0,u(t),t),t)=0$ then $(s(x_0,u(t),t),t)\in Gu((l_0,l_{\Uc}))$. Let us denote the transition time by $\tau(x_0,u)$, which reminds us that this transition time differs for different trajectories; if the dependence was clear from context, we will write down $\tau$, for brevity. Assume that $Re(x,(l_1,l_2))=h(x)$, we have:
 \begin{align}
 s(x_0,u(\tau^+),\tau^+)=h(s(x_0,u(\tau^-),\tau^-))
 \end{align}
To calculate the value of $p_{x_0}$ at $\tau^+$ we take derivatives with respect to $x_0$ from the above equation. We have:
 \begin{align*}
\frac{ds(x_0,u,\tau^+)}{dx_0}=\frac{\partial h}{\partial x}&\frac{ds(x_0,u,\tau^-)}{dx_0} \Rightarrow
\end{align*} \vspace{-12pt} 
 {\small\begin{align*}
 {D_1 s(x_0,u,\tau^+)+D_3s(x_0,u,\tau^+)\frac{\partial \tau}{\partial x_0}=  \frac{\partial h}{\partial x}(D_1(s(x_0,u,\tau^-))+D_3s(x_0,u,\tau^-)\frac{\partial \tau}{\partial x_0})} 
 \end{align*}}\vspace{-12pt}
\begin{align}
\Rightarrow p_{x_0}(\tau^+,t_0)=D_1 s(x_0,u,\tau^+)=\frac{\partial h}{\partial x}p_{x_0}(\tau^-,t_0)+(&\frac{\partial h}{\partial x}f^--f^+)D_1\tau) \label{eq1}
 \end{align}
 where $\frac{\partial h}{\partial x}=\frac{\partial h}{\partial x}\big \rvert _{s(x_0,u(\tau^-),\tau^-)}$, and $f^-$ and $f^+$ are equal to $F_{l_0}(s(x_0,u(\tau^-),\tau^-)$, $u(\tau^-),\tau^-)$ and $F_{l_\Uc}(s(x_0,u(\tau^+),\tau^+),u(\tau^+)),\tau^+)$ respectively.
  To calculate $D_1\tau $, consider that $\tau$ satisfies $g(s(x_0,u,\tau),\tau(x_0,u))=0$, taking the derivatives with respect to $x_0$, we have:\vspace{-5pt}
  \begin{align}
 D_1g^T(D_1s(x_0,u,\tau)&+D_3s(x_0,u,\tau).D_1\tau )+
D_2g.D_1\tau =0\nonumber\\
\Rightarrow D_1\tau =\frac{\partial \tau}{\partial x_0}&=-\frac{D_1g^T.p_{x_0}(\tau^-,t_0)}{D_1g^T.f^-+D_2g} \label{tx}
  \end{align}

Using similar analysis we have:
  \begin{align}
  p_{u}(\tau^+,t_0)=\frac{\partial h}{\partial x}&p_{u}(\tau^-,t_0)+(\frac{\partial h}{\partial x}f^--f^+)D_2\tau ^T\\
  D_2\tau=&-\frac{D_1g^T.p_{u}(\tau^-,t_0)}{D_1g^T.f^-+D_2g}  \label{tu}
  \end{align}
  Using a hybrid automaton, sensitivity and system states can be calculated simultaneously (see Fig. \ref{fig:sensauto}). This will easily let us calculate the sensitivities by reseting their values at transition times. 
    \begin{figure}[t]
    	\centering
    	\includegraphics[width=8.5cm]{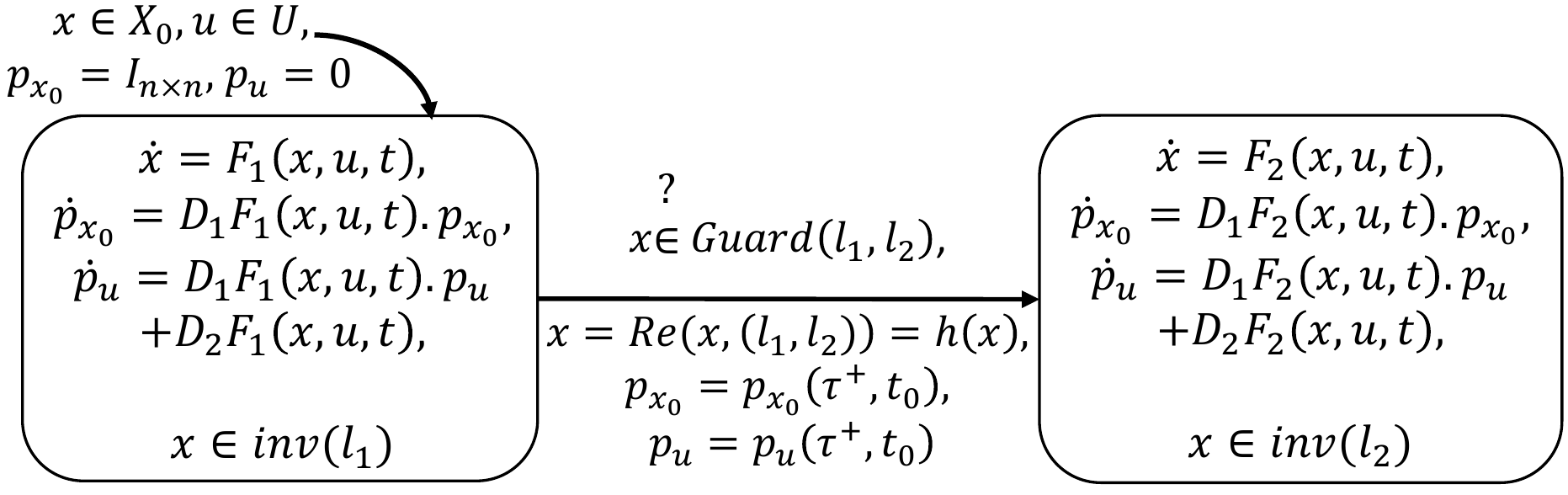}
    	\caption{HA of the system and trajectory sensitivity}
    	\label{fig:sensauto}
    \end{figure}
Note that using equations (\ref{eq1}) to (\ref{tu}), for a system with time triggered transitions ($g(x,t)=g'(t)$) whose reset map is identity ($h(x)=x$), there are no jumps in sensitivities, i.e, $p_{x_0}(\tau^+,t_0)=p_{x_0}(\tau^-,t_0)$ and $p_{u}(\tau^+,t_0)=p_{u}(\tau^-,t_0)$. These types of hybrid systems can be handled using our previous work in {\cite{YaghoubiF2017acc}} where we showed how to use system linearized matrices to approximately calculate the decent direction. However to have these kinds of gray box analysis for hybrid systems with state dependent transitions, we also need to have some information about the guards or be able to approximate them in order to model the jumps in the sensitivity. In the future we will work on the descent calculation using gray box models of the general hybrid systems. 

\ifthenelse{\boolean{Present}}{Algorithm (\ref{alg:GD}) describes the procedure to find gradient descent (GD) directions for hybrid systems. The function ``\text{\sc Simul\&Sens\&GetRob}'' calculates the sensitivity matrices $p_{x_0}$ and $p_u$ as well as the robustness value $r$, the critical time $t^*$ and the approach vector $n_s=\frac{\partial G}{\partial x}$ with respect to the specification $\varphi$. Given $t^*,n_s,p_{x_0},p_u$, ``GD'' calculates the gradient descent directions $dx,du$ using Eq. (\ref{dxu}). ``INBOX'' calculates the new initial condition and input while ensuring that they lie inside the desired sets $X_0$ and $U^{[0,T]}$.

\begin{algorithm}[t]
		\caption{ Robustness Gradient Descent algorithm}
		\label{alg:GD}
		\begin{algorithmic}[1]
			\Require Hybrid system model $\Hc$, initial condition and input $x_0$ and $u$, sets of possible initial conditions and input values $X_0$ and $U$, system specification $\varphi$, finial time $T$, step size $h$, maximum number of iterations we descend $k_1$, maximum number of iterations we decrease the step size $k_2$ and the multiplier of the step size $p<1$.
		\Ensure local optimal initial condition $x_0^*$, local optimal input $u^*$ and the related optimal robustness value $r^*$			\State $(x_0',u',r^*)$ $\gets$ $(x_0,u_0,\infty)$
            \For {$i= \; 1 \; to \; k_1$}
			\State $(r,t^*,n_s,p_{x_0},p_u)$ $\gets$ \text{\sc Simul\&Sens\&GetRob}$(x'_0,u',\Hc,T,\varphi)$
			\If {$r \leq r^*$}
			\State $(x_0,u)$ $\gets$ $(x_0^\prime,u^\prime)$, \; $(x_0^*,u^*,r^*)$ $\gets$ $(x_0^\prime,u^\prime,r)$
			\Else
            \State $h'$$\gets$$h$
			\For {$j= \; 1 \; to \; k_2$}
			\State $h'$ $\gets$ $h'.p$
			\State $(x_0^\prime,u^\prime)$ $\gets$  \text{\sc inbox}$(x_0,u,X_0,U_0,h',d_x, d_u)$
				\State $(r,t^*,n_s,p_{x_0},p_u)$ $\gets$ \text{\sc Simul\&Sens\&GetRob}$(x'_0,u',\Hc,T,\varphi)$
            \If {$r \leq r^*$}
			\State $(x_0^*,u^*,r^*)$ $\gets$ $(x_0^\prime,u^\prime,r)$
			\State $\mathbf{Break}$
			\EndIf
			\EndFor
			\EndIf
            \State $(dx,du)$ $\gets$ \text{\sc GD}$(t^*,n_s, p_{x_0}, p_u)$.
			\State $(x_0^\prime,u^\prime)$ $\gets$  \text{\sc inbox}$(x_0,u,X_0,U_0,h,d_x, d_u)$
			\EndFor
		\end{algorithmic}
    \end{algorithm}}{An algorithm to find the gradient descent (GD) directions for hybrid systems is mentioned in the technical version of the paper \cite{techversion}.} 
\setlength{\textfloatsep}{5pt}

\section{Experimental Results} 
\ifthenelse{\boolean{Present}}{
In order to show the utility of our method, we used the following three examples in which we deal with nonlinear hybrid systems. In all the experimens we used MATLAB 2015b on an Intel(R) Core(TM) i7-4790 CPU @3.6 GHZ with 16 GB memory processor with Windows Server 2012 R2 Standard OS.

	\begin{example}
The first example models the motion of a billiard ball. 
The ball is initially placed at $(x_0,y_0)$, and it is shot in direction $a$  with speed $v$ where $a$ is the throw angle with the $x$-axis. We assume there is no friction between the table and the ball. When the ball hits the sides parallel to $x$-axis (lines $y=0,2$), its velocity on the $y$-direction flips sign while the velocity in $x$-direction remains unchanged. Also, when it hits the side $x=4$, its velocity on the x-direction flips sign and the velocity on the $y$-direction remains unchanged. That is, the collisions of the ball with the table sides are perfect and no energy is lost.
The system can be modeled using a simple hybrid automaton, as shown in Fig (\ref{fig:pool}). 

Assume, we want to hit the ball such that it eventually falls in the hole centered in $(0.2,1.6)$ with radius $0.1$. The search is done over 3 Dimensions: We allow the ball to be initially placed at $[0,0.2]^2$ and be shot using angle $a \in [30^{\circ},45^{\circ}]$ with $v=1$.
	Starting from the initial condition $(0.1,0.1)$, and using $a=48.5^{\circ}$, the throwing process is refined using GD method. The trajectories are shown in Fig (\ref{fig:ex1}) where we refined light gray trajectories to the darker ones.
		\end{example}
    	
        \begin{example}
		Our second example is a hybrid model of glycemic control in diabetic patients taken from \cite{chen2012taylor} in which they used feedback control strategies by \cite{furler1985blood} and \cite{fisher1991semiclosed}. The variation of insulin glucose levels in diabetic patients is modeled using the following equations:
                    \vspace{-4pt}
        {\small	\begin{align}
			 \begin{bmatrix}
 	    \dot{G}        \\
 	    \dot{X}   \\
 	    \dot{I}
 	\end{bmatrix}=\begin{bmatrix}
 	    -p_1 G-X(G+G_B)+u_1(t) \\
 	     -p_2X+p_3I   \\
 	    -n(I+I_b)+\frac{u_2(t)}{v_I}
 	\end{bmatrix}
			\end{align}}
                        \vspace{-4pt}

          where the state $G$ is the level of glucose in the blood above the basal value $G_B=4.5$, $X$ is proportional to the insulin level that is effective in glucose level control, and $I$ is the insulin level above the value $I_b=15$. Typical parameter values for $p_2$, $V_I$, and $n$ are 0.025, 12 and 0.093, respectively, and parameters $p_1$ and $p_3$ are patient dependent. The functions $u_1(t)$ and $u_2(t)$ model the infusion of glucose and insulin into the bloodstream in order to control their levels, and their values are chosen based on the following equations:
             \vspace{-2pt}
 {\small \begin{align}
 		u_1(t)=\begin{cases}
		1+\frac{2G(t)}{9} \qquad G(t)<6\\
		\frac{50}{3}  \qquad G(t) \geq 6
		\end{cases}
		,\; u_2(t)=\begin{cases}
		\frac{t}{60} \qquad t \leq 30\\
		\frac{120-t}{180} \qquad 30 \leq t\leq 120\\
		0 \qquad t \geq 120
		\end{cases} \nonumber
		\end{align} }

\begin{figure}[tbp]
  \centering
  \begin{minipage}[b]{0.3\textwidth}
\includegraphics[width=4.5cm]{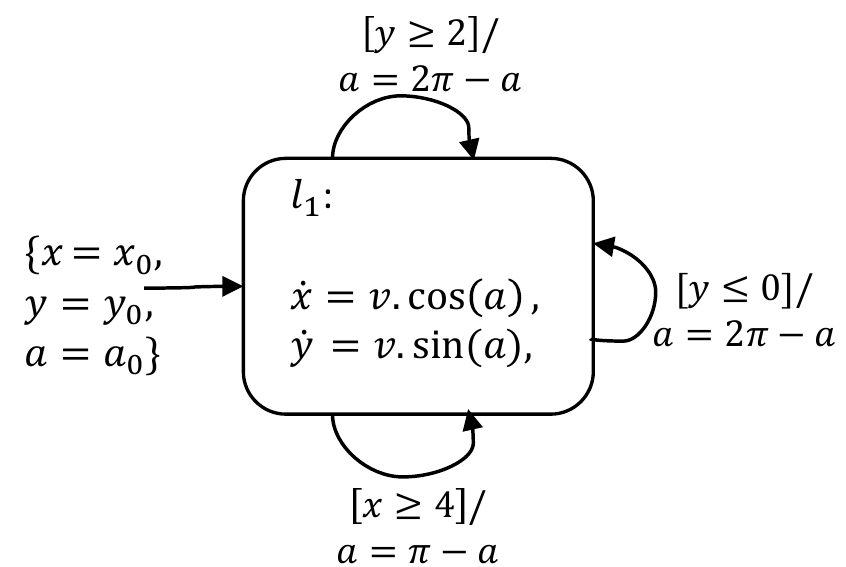}
\caption{Billiard ball hybrid automaton.}
\label{fig:pool}
\vspace{+15pt}
  \end{minipage}
  \hfill
  \begin{minipage}[b]{0.6\textwidth}
\includegraphics[width=\linewidth]{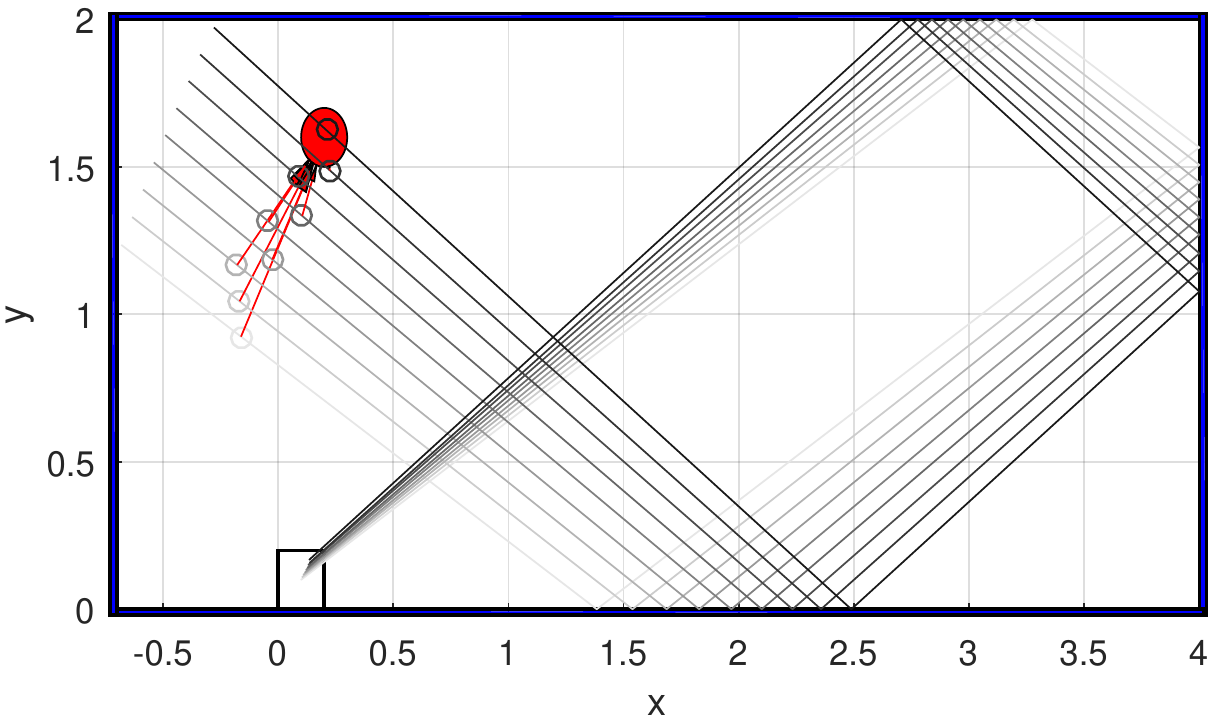} 
\vspace{-15pt}
\caption{Starting in the white box and using $a\in[30^{\circ},45^{\circ}]$, the ball should be placed in the red hole. The red arrows show the descent direction.}
\label{fig:ex1}
\vspace{-10pt}
  \end{minipage}
\end{figure}    
	Using the above control schemes for $u_1(t)$ and $u_2(t)$ yields a hybrid automaton with 6 locations/modes with 4 timed-based and 6 state-based guards. While the sensitivity-states $p_{x_0}$ and $p_u$ go under jumps during the state-based transitions, they remain unchanged in timed transitions. Note that based on Eq. (\ref{tx}) and (\ref{tu}), in timed transitions, $D_1g$ and as a result $\tau_{x_0}=\frac{\partial \tau}{\partial x_0}$ and $\tau_u=\frac{\partial \tau}{\partial u}$ are zero. In this example the search is over 5 dimensions: $[G,X,I]\in [6,9.5]\times[0.15,0.18]\times[-0.1,0.1]$ and $p_1\in[0,0.02]$ and $p_3 \in [10^{-5},10^{-4}]$. The system should satisfy:
		\begin{align}
		\varphi_1=&\Box_{[0,30]}G\in[-3,10] \wedge \Box_{[30,120]}G\in[-1.5,5.1] \wedge \Box_{[120,200]}G\in[2,5] \nonumber
		\end{align}
		Starting from $x_0=(6.5,0.17,0)$, and using $(p_1,p_3)=(0.01,1.3\times 10^{-5})$ with robustness 0.8287, the optimization process reduces the robustness to -0.0213 using $x_0=(6.5001,0.1506,-3.064\times10^{-6})$, and $(p_1,p_3)=(0.0097,1\times 10^{-4})$ which results in falsification of the requirement. Figure (\ref{fig:ex2}) shows the glucose trajectories $G(t)$ where the search is started using the light gray trajectories and refined to the darker ones. Note that because of the local search property of the method, for trajectories in this this local search, the effort is put on decreasing the distance to the critical unsafe set, which is the set $[5.1,\infty)$ at the critical time $t^*\in [30,120]$.
		\begin{figure}[t]
		 	\centering
		 	\includegraphics[width=7 cm]{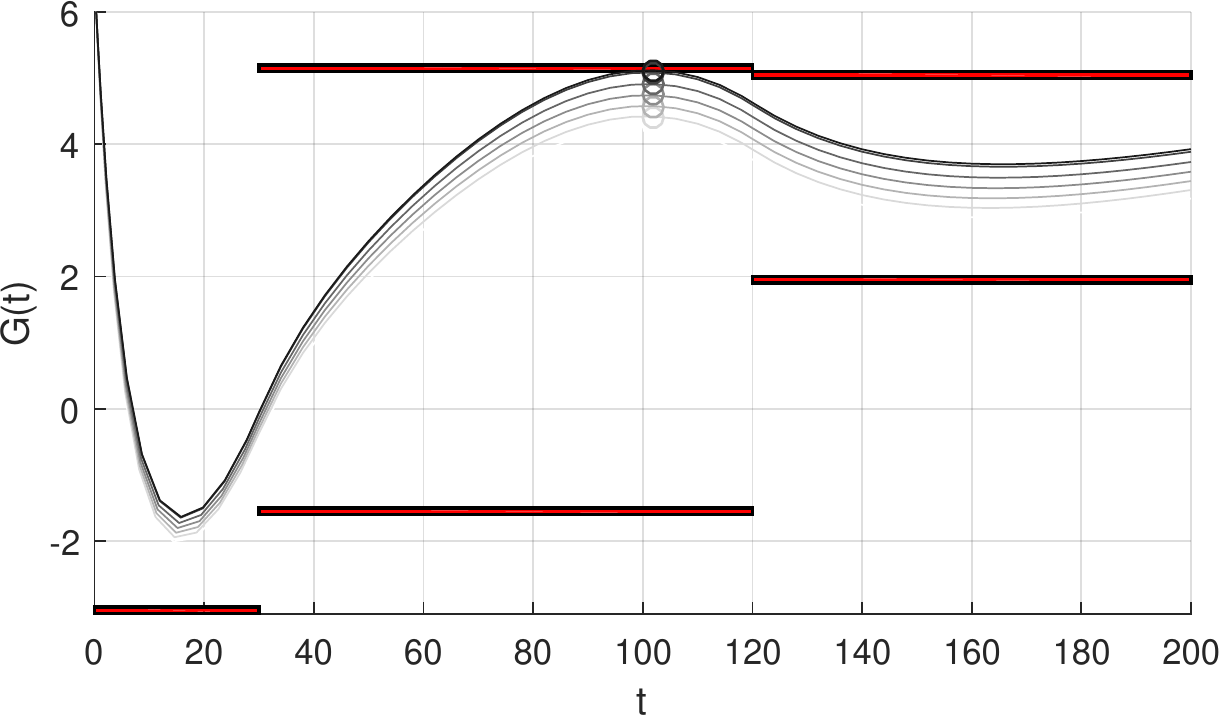}
            \vspace{-5pt}
		 	\caption{Falsification of the specification $\varphi_1$ for the glycemic control model}
		 	\label{fig:ex2}
		\end{figure}
	\end{example}}

		\ifthenelse{\boolean{Present}}{\begin{example} \label{ex3}
	Our last example is a rotating planar vehicle.}{In order to show the utility of our method, in \cite{techversion} we used three examples in which we deal with nonlinear hybrid systems. In all the experiments we used MATLAB 2015b on an Intel(R) Core(TM) i7-4790 CPU @3.6 GHZ with 16 GB memory processor with Windows Server 2012 R2 Standard OS. In the following we present one of the examples:
    \begin{example} \label{ex3}
} Consider the motion of a rigid object on a plane that uses a pair of off-centered thrusters as the control input. Since these thrusters are not aligned with the center of the mass, they will create both translational and rotational motions on the vehicle \cite{winn2015safety}. \ifthenelse{\boolean{Present}}{(see Fig. \ref{fig:thrust} for a better illustration). T}
The system is supposed to satisfy the requirement in Eq. (\ref{req}) which implies that the vehicle should avoid the unsafe sets $\Uc_1$ and $\Uc_2$ (shown in Fig. \ref{fig:car} with red boxes) and reaches the goal set $G$ (shown with a blue box) within the simulation time $T=10$. Here $(x_1,x_2)$ is the vehicle position.
    	    \begin{align}
	    \varphi_2=\Box_{[0,10]} \neg ((x_1,x_2)\in \Uc_1 \vee (x_1,x_2)\in \Uc_2) \wedge \Diamond_{[0,10]} (x_1,x_2)\in G  \label{req}
	    \end{align}        

The location-based dynamics of the vehicle are mentioned in Eq. (\ref{car}), where $j\in\{1,2,3\}$, $x_1,x_2$ are the positions along the $x$ and $y$ axis, $x_3$ is the angle with the $x$-axis and $x_4,x_5$ and $x_6$ are their derivatives. The hybrid model consists of 3 locations, where $inv(l=1)=\{x|x_1 <4\}$, $inv(l=2)=\{x|4 \leq x_1 \leq 8\}$, and $inv(l=3)=\{x|x_1>8\}$. The guards are shown using dashed lines in Fig. \ref{fig:car}. 
The unsafe sets have attractive non-centered forces in their corresponding locations. 
In particular, $\Uc_1$ is located in location 2 and $\Uc_2$ is located in location 3. 
At location 1, $s_1(l=1)=s_2(l=1)=0$, at location 2, $s_1(l=2)=-1$ and $s_2(l=2)=0$, and at location 3, $s_1(l=3)=0$ and $s_2(l=3)=-2$. $(\alpha_1,\beta_1)$ and $(\alpha_2,\beta_2)$ are the centers of $\Uc_1$ and $\Uc_2$, respectively.	  
	     \begin{align}
	    	 \begin{bmatrix}
	    	\dot{x}_j \\
	    	\dot{x}_4 \\
	    	\dot{x}_5 \\
	    	\dot{x}_6 \\
	    	\end{bmatrix}=
	    	\begin{bmatrix}
	    	x_{j+3} \\
	    {\scriptstyle 0.1x_4+\Sigma_{i=1,2}s_i(l)(x_1-\alpha_i)+F_1cos(x_5)-F_2sin(x_5)}\\
	    	 {\scriptstyle 0.1x_4+\Sigma_{i=1,2}s_i(l)(x_2-\beta_i)+F_1sin(x_5)-F_2cos(x_5)} \\
	    	-\frac{b}{I}F_1+\frac{a}{I}F_2
	    	\end{bmatrix}
            \label{car}
	    	\end{align}
   		   Our search is over the initial values in $[0,1]\times [0.5,1]$, and the input signals $F_1(t),F_2(t)\in [-1,1]$; other states are zero initially. Since the search over all the continuous input signals is a search in infinite dimension, here, we used piecewise constant inputs with 11 variables for each $F_1(t)$ and $F_2(t)$. So the overall search is over 24 dimensions.
  \ifthenelse{\boolean{Present}}{
  We start our search from the trajectory with $x_0=(0.5,0.6,0,0,0,0)$, and input signals $F_1(t)=0.2$, and $F_2(t)=0.1$ for $t\leq 7.2$ and $F_2(t)=-0.2$ for $t> 7.2$. This trajectory satisfies (\ref{req}) with the robustness value equal to 0.2950.}
  {Starting from a trajectory that satisfies Eq. (\ref{req}) with the robustness value equal to 0.2950, our method improves the robustness value to 0.8599 (Note that while in a falsification problem we try to decrease the robustness value, in a related problem called satisfaction problem increasing that value is desired).}
   \ifthenelse{\boolean{Present}}{
   Using our method with step size $h=0.02$, in the 8th iteration, the initial condition $x_0=(0.3983,0.6948,0,0,0,0)$ and the inputs shown in Fig (\ref{fig:caru}) are chosen and the robustness is improved to the value 0.8599 (Note that while in a falsification problem we try to decrease the robustness value, in a related problem called satisfaction problem increasing the robustness value is desired).
  }
   The projection of the trajectories into the $x_1-x_2$ plane is shown in Fig. \ref{fig:car}, where dark gray trajectories are refined to light gray ones. In Fig. \ref{fig:guardcar}, one can see that even if the trajectory from which we want to descend does not enter the goal set location, we are still able to improve the trajectory by descending toward the adjacent guard with the least distance from that set. 
           
         \ifthenelse{\boolean{Present}}{  \begin{figure}[t]
            \centering
            \begin{minipage}[b]{0.35\textwidth}
          \includegraphics[width=\linewidth]{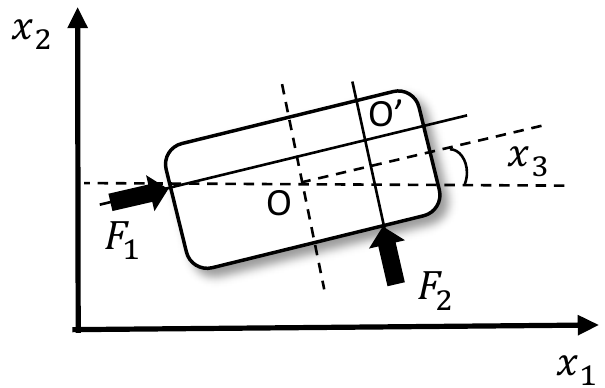}
				    	\caption{The rigid body with off centered thrusters. O is the center of the mass but the force is applied to O'.}
				    	\label{fig:thrust}
                        \vspace{10pt}
            \end{minipage}
            \hfill
            \begin{minipage}[b]{0.56\textwidth}
          \includegraphics[width=\linewidth]{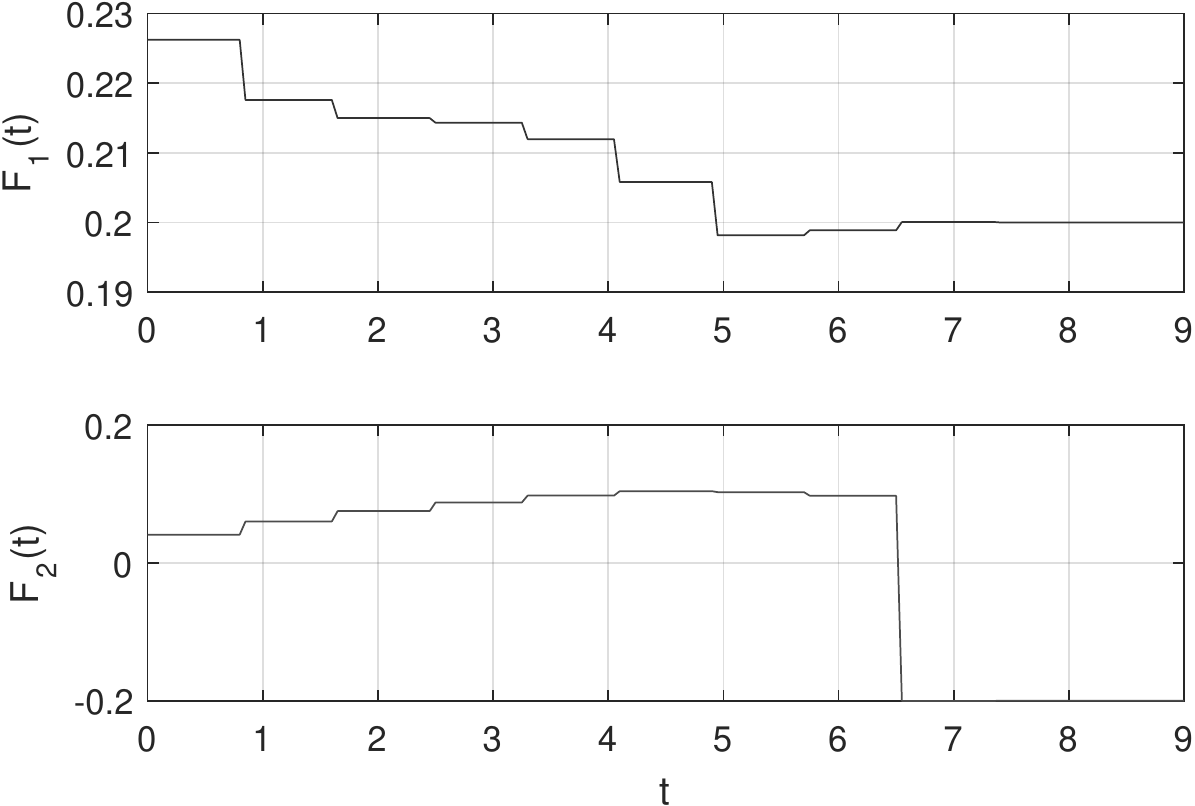}
	    	\caption{Final inputs in the process of improving the robustness valuation of the system of Eq. (\ref{car}) with respect to the specification $\varphi_2$.}
	    	\label{fig:caru}
            \end{minipage}
          \end{figure} }

 \begin{figure}[t]
            \centering
            \begin{minipage}[b]{0.47\textwidth}
	    	\includegraphics[width=1.05\linewidth]{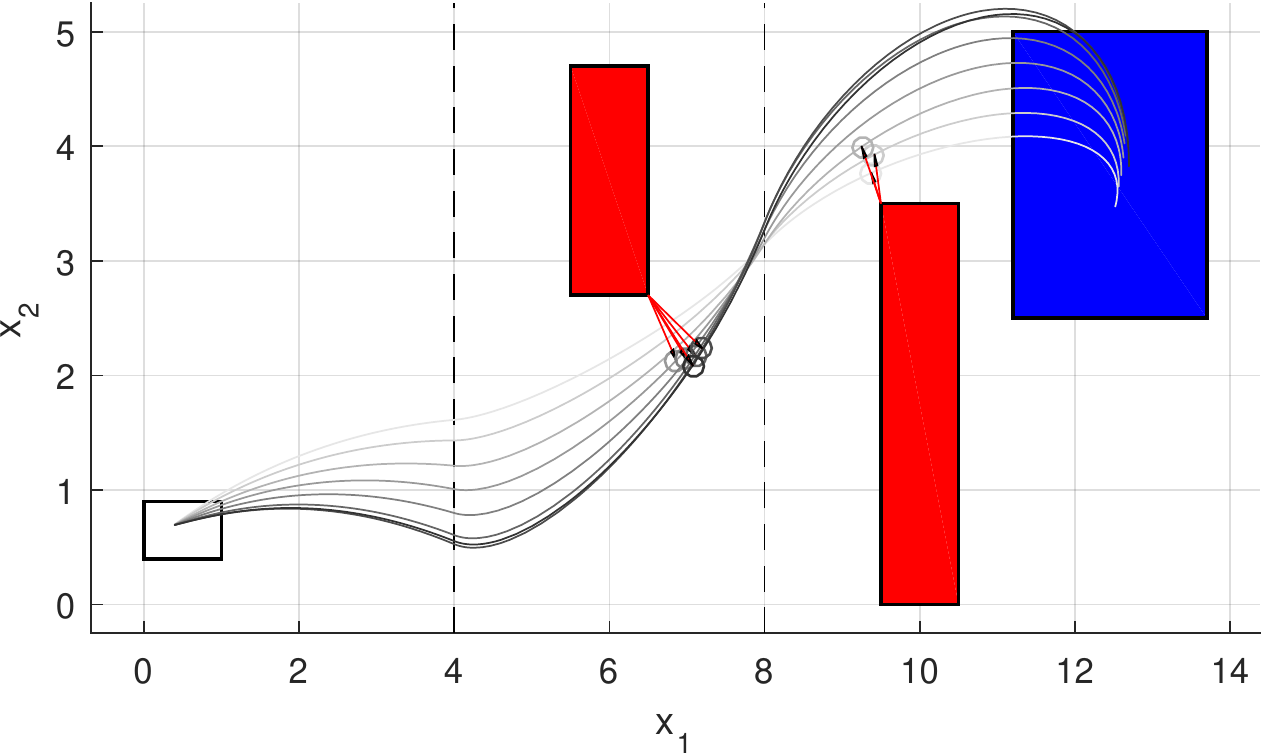}
	    	\caption{Improving the robustness of the trajectories of the system of Eq. (\ref{car}) with respect to the specification $\varphi_2$ from 0.2950 to 0.8599. Red arrows show the steepest ascent direction. }
	    	\label{fig:car}
            \end{minipage}
            \hfill
            \begin{minipage}[b]{0.49\textwidth}
          \includegraphics[width=1\linewidth]{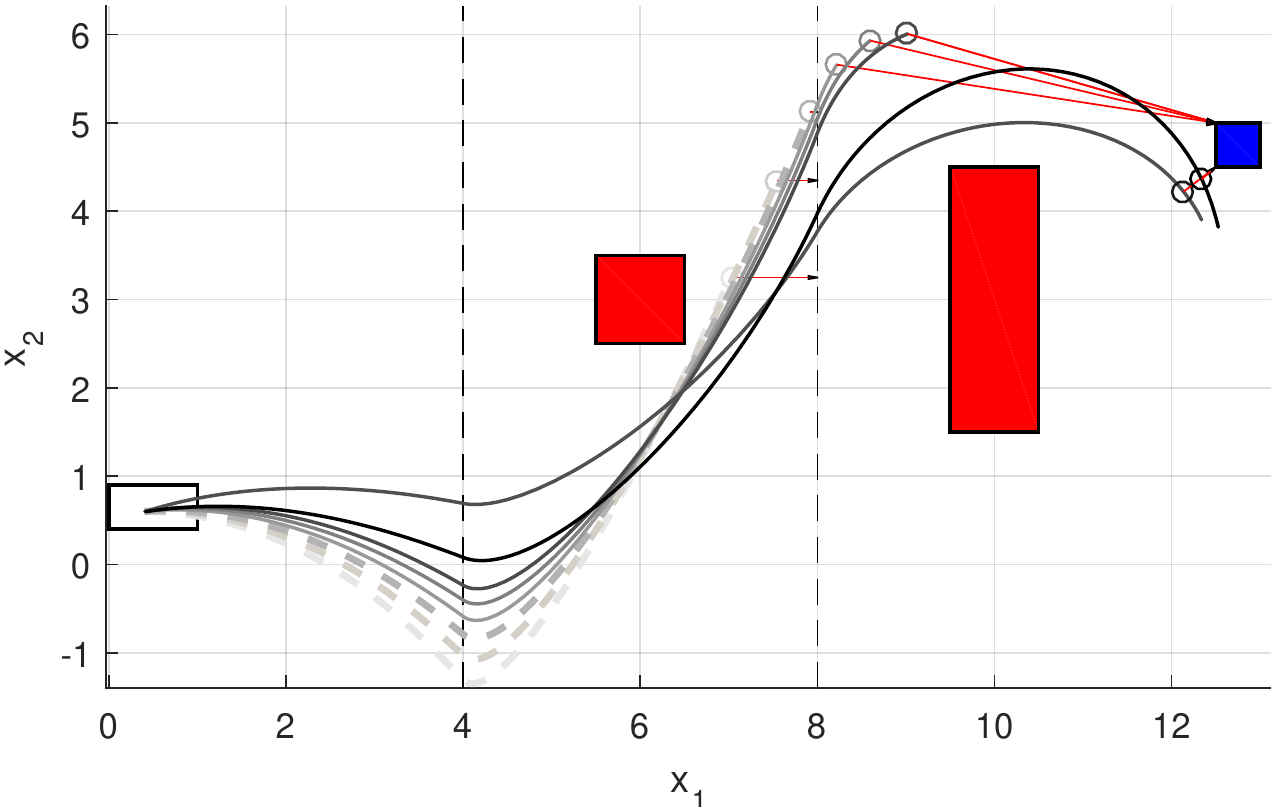}
	    	\caption{Trajectories that do not enter the goal set location (dashed trajectories here) can still improve by descending toward the guard set (dashed line at $x_1 = 8$).}
	    	\label{fig:guardcar}
            \end{minipage}
\end{figure}

	\end{example} 

In order to determine the effect of applying GD local search method to global search methods like Simulated Annealing (SA), we performed a statistical study in which we compare the combination of SA and GD (SA+GD) with SA only. To combine SA and GD, we apply GD algorithm whenever the samples taken by SA return a robustness value less than some threshold value $r_T$.

In our experiment we ran SA and SA+GD for 150 times with equal total number of samples $N=100$ and $r_T=2.5$\ifthenelse{\boolean{Present}}{. We set the parameters in Alg. ({\ref{alg:GD}}) to  $k_1=10$, $k_2=2$, $h=0.02$}{ }
 to automatically search for initial conditions and inputs that satisfy the specification $\varphi_2$ with $\Uc_1=[5.5,6.5]\times[2.5,3.5]$, $\Uc_2=[9.5,10.5]\times[1.5,4.5]$, $G=[12.5,13]\times[4.5,5]$ for the system in Example (\ref{ex3}).
In order to satisfy $\varphi_2$, we try to falsify its negation $\neg \varphi_2$. The results are shown in Table (\ref{tb:1}). The improvement in finding falsifying trajectories is clear from the total number of falsifications in the first row. Also, since GD gets a chance to improve the performance only if SA finds a robustness value less than $r_T$, we added the second row which shows in how many percents of the cases falsification is achieved if SA finds a robustness value less than $r_T$. While average of the best robustness value for all the tests is better for SA+GD algorithm, it is slightly better for SA if we only consider non-falsified cases. We can conclude that even if SA finds  small robustness values, it is hardly able to further decrease it. As the constant budget in the comparison is ``equal total number of simulations", we can claim that SA+GD can help improve the results if simulations/experiments are costly.  Choosing different design parameters might lead to even better experimental results.
 {\small\begin{table}[t]
	\renewcommand{\arraystretch}{1.3}
	\caption{Comparing SA and SA+GD results for the system of Example \ref{ex3}}
	\label{tb:1}
	\centering
    \scalebox{0.9}{
	\begin{tabular}{|c|c|c|}
		\hline
		Optim. method & SA & SA+GD \\
		\hline
	   num. of total falsification &  4/150   &  16/150 \\
       \hline
       \%  of falsification if SA finds $r\leq r_T$ & 13.33\%  & 39.02\%\\
		\hline
       Avg. min-Rob. (all the cases) &  9.1828   &  8.4818\\
		\hline
               Avg. min-Rob. (not falsified cases) &  9.4278  &  9.4968\\
		\hline
       min. min-Rob. (not falsified cases) &  0.0080  &  0.0059\\
		\hline
       max. min-Rob. (not falsified cases) &   13.1424  &   13.0880\\
		\hline
	\end{tabular}}
\end{table}}
\vspace{-5pt}
\section{Related Work}
One possible categorization for falsification approaches divides them into Single Shooting (SS) vs. Multiple Shooting (MS) methods. 
The technique of numerically solving boundary value problems is called shooting. SS approaches search over the space of system trajectories initiated from the set of initial conditions and under possible inputs. S-TaLiRo \cite{annpureddy2011s} and Breach \cite{donze2010breach} lie in this category. In contrast, MS approaches create approximate trajectories from trajectory segments starting from multiple initial conditions (not necessarily inside the initial set). 
Hence, the trajectories contain gaps between segments. 
The works \cite{zutshi2014multiple,zutshi2013trajectory} fall in in this category.
MS techniques cannot handle general TL requirements.

Motion planning approaches such as Rapidly-exploring Random Trees (RRT) lie in a category between SS and MS approaches.
Starting from an initial condition, the tree grows toward the unsafe set (or vice versa) to find an unsafe behavior of a non-autonomous system \cite{DreossiDDKJD15nfm,PlakuKV09tacas}. 
The applicability of these methods, however, is limited since it depends on many factors such as the dimensionality of the system, the modeling language, and the local planner.

\ifthenelse{\boolean{Present}}{Another possible categorization of falsification methods divides them into the following categories:
 \begin{enumerate}
 \item Methods that rely on optimizing a metric (called robustness) to systematically search for falsification: These methods try to minimize a robustness value which is assigned to each trajectory using global optimization techniques like SA and Cross Entropy. S-TaLiRo \cite{annpureddy2011s} and Breach \cite{donze2010breach} are among the tools that use this strategy. 
\item Methods that use constraint solvers to find falsification by translating the problem into constraint solving using Bounded Model Checking (BMC) approaches \cite{clarke1999model}. However, these approaches discretize the continuous dynamics and the resulting constraints become nonlinear even for linear hybrid systems by involving higher order terms, and,
\item Motion planning approaches such as Rapidly-exploring Random Trees (RRT): These methods lie in SS approach category where starting from an initial condition the tree grows toward the unsafe set (or vice versa) to find an unsafe behavior of a non-autonomous system \cite{DreossiDDKJD15nfm}. The applicability of these methods, however, is limited since it  depends on many factors such as system behavior itself and the used planner.
\vspace{-5pt}
 \end{enumerate}}

The performance of SS falsification methods can be improved using different complementary directions.
One direction is to provide alternative TL robustness metrics \cite{AkazakiH15cav}.
Another direction is to compute guaranteed or approximate descent directions \cite{YaghoubiF2017acc,abbas2013computing} in order to utilize descent optimization methods.
Our method in this paper is a SS approach that uses optimization and robustness metric to solve the falsification problem.
In \cite{abbas2013computing,abbas2014functional} robustness-based falsification is guided using descent direction; however, that line of work is only applicable to purely continuous systems. 
In \cite{abbas2011linear}, descent direction is calculated in the case of linear hybrid systems using optimization methods. 

In \cite{zutshi2013trajectory} authors use a MS approach to find falsifying trajectories of a hybrid system. Providing the gradient information to an NLP solver, they try to reduce the gaps between segments. Like our approach, they require knowledge of the system dynamics and solve a local search problem.
Unlike our method, in their approach, falsifying trajectories are segmented trajectories which are not real system trajectories unless the gaps between segments become zero in the optimization procedure (for systems with identity reset maps), which may not be the case, in general. 
As a result, falsification cannot be concluded unless they can randomly find a neighboring real system trajectory that violates the specification. We think that our approach can help their method to effectively search over real trajectories neighboring the segmented trajectory.   
Furthermore, the specifications they have focused on in \cite{zutshi2013trajectory} are safety properties and because of the nature of the search, their method cannot easily be extended to search for system trajectories that falsify general MTL formulas.

The general idea of using sensitivity to explore the parameter space of a problem that deals with robustness of a TL formula was first introduced in \cite{donze2010robust}. To solve a verification problem, they propose using the sensitivity of a robustness function to a parameter assuming that the function is differentiable to that parameter. 
There are however multiple factors which result in non-differentiability of the robustness function with respect to a parameter: First of all, the predicates themselves might be non smooth and non-differentiable. Secondly, hybrid systems may have non smooth and non-differentiable trajectories. Finally, logical operators in the TL formula impose $min$ and $max$ operators to robustness function. The paper suggests using left and right hand derivatives for dealing with $min$ and $max$ operators, but it does not propose solutions for the first two cases. 
In our framework, by introducing Eq. (\ref{jcon}), we solve the non differentiability issue in the first case and the analysis in Sec. \ref{sec4} deals with this issue in the second case. 
Also, the problem we try to solve is a different problem (a falsification problem).  

In \cite{pant2017smooth}, a smooth infinitely differentiable robustness function is introduced which solves -- to some extent -- the non-differentiability problem of the robustness function to parameters. In the case of hybrid systems however, we still deal with this problem as the non-differentiability is caused by the system model rather than the robustness function itself. 
In the future, we will investigate if the results in \cite{pant2017smooth} could further improve the performance of gradient descent falsification methods as formulated in our work.

In \cite{donze2009parameter}, an algorithm to approximate reachable sets using sensitivity analysis is introduced. 
Sensitivity of hybrid systems without reset maps is used to verify safety properties. 
Like all approaches that try to solve a coverage problem, the method suffers from the state explosion issue which happens when one tries to cover the high dimensional spaces induced by the variables in the input signal parameterization.
Our framework solves a different problem and it is applicable to hybrid systems with reset maps under general TL formulas. 
Furthermore, as we are not solving a coverage problem, we do not face the state explosion issue.

\section{Conclusion}
 \vspace{-5pt}
TL robustness guided falsification \cite{AbbasFSIG13tecs} has shown great potential in terms of black or gray box automatic test case generation for CPS \cite{FainekosSUY12acc,StrathmannO15arch,SankaranarayananEtAl2016medcps}.
In this paper, we presented a method that locally improves the search for falsifying behaviors by computing descent directions for the TL robustness in the search space of the falsification problem.
Our proposed method computes such descent directions for non-linear hybrid systems with external inputs, which was not possible before in the literature.
Using examples, we demonstrated that our framework locally decreases the TL robustness at each iteration.
Furthermore, our preliminary statistical results indicate that it is possible to improve a global test-based falsification framework when the proposed local gradient descent method is utilized.

Currently, the proposed framework requires a symbolic representation of the non-linear dynamics and the switching conditions of the hybrid automaton in order to compute the descent direction. 
As future research, we expect that we can relax this requirement by numerically computing approximations to the descent directions similarly to our work for smooth non-linear dynamical systems \cite{YaghoubiF2017acc}.
This will enable the application of the local descent method to a wide range of Simulink models without explicit extraction of the system dynamics.


\vspace{-5pt}
\section*{Acknowledgments}
\vspace{-5pt}
This work was partially supported by the NSF awards CNS-1319560, CNS 1350420, IIP-1361926, and the NSF I/UCRC Center for Embedded Systems.
%
%
\bibliographystyle{splncs}
\bibliography{hybref_1}

\end{document}